\newcommand{\T}{{\rm T}}
\newcommand{\tr}{{\rm Tr\,}}
\newcommand{\cc}{\mathbb{C}}
\newcommand{\rr}{\mathbb{R}}
\newcommand{\gd}{\mathcal{D}}
\newcommand{\nn}{\mathcal{N}}
\newcommand{\bx}{\boldsymbol{x}}
\newcommand{\by}{\boldsymbol{y}}
\newcommand{\onb}{orthonormal basis }
\newcommand{\bra}[1]{\left\langle #1 \right|}
\newcommand{\ket}[1]{\left| #1 \right\rangle}
\newtheorem{proposition}{Proposition}[section]
\newtheorem{corollary}[proposition]{Corollary}
\newtheorem{theorem}[proposition]{Theorem}
\newtheorem{lemma}[proposition]{Lemma}
\newtheorem*{defn*}{Definition}
\newtheorem*{thm*}{Theorem}
\theoremstyle{definition}
\numberwithin{equation}{section}
\title{{\bf Comparing Geometric Discord and Negativity for\\ Bipartite States}}
\author{Priyabrata Bag, Santanu Dey, and Hiroyuki Osaka}
\date{}
\begin{document}
\maketitle

\begin{abstract}
\noindent The geometric discord $\mathcal{D}$ of a state is a measure of the
quantumness of the state and the negativity $\mathcal{N}$ is a measure of
the entanglement of a state. It was proved by D. Girolami and G. Adesso that
for states on $\mathbb{C}^2\otimes\mathbb{C}^2$, the geometric discord is
always greater than or equal to the square of the negativity and conjectured
that this holds in general. S. Rana and P. Parashar showed that this
relation does not hold for all states on $\mathbb{C}^2\otimes\mathbb{C}^n$
for $n>2$. We provide several analytic families of states on
$\mathbb{C}^2\otimes\mathbb{C}^3$ violating this relation. Certain upper and
lower bounds for $\mathcal{N}^2-\mathcal{D}$ are obtained for states on
$\mathbb{C}^m\otimes\mathbb{C}^n$ for any $m, n\in\mathbb{N}$.
\\
\mbox{}\\
\noindent{\bf Keywords:} Bloch form, entanglement, Gell-Mann matrices,
geometric discord, negativity, Pauli matrices, ${\rm SU}(d)$.
\end{abstract}

\section{Introduction}
\label{intro}

The entanglement of a state is related to quantum correlations. From recent findings,
it has turned out that the notion of quantum discord is a good measure of quantum
correlations. Let $\Omega_0$ be the set of \emph{classical-quantum states}, given by
$\sum p_k\ket{\psi_k}\bra{\psi_k}\otimes\rho_k$. We denote $M_m\otimes M_n$ by
$m\otimes n$, where $m, n\in\mathbb{N}$.

For an $m\otimes n$ ($m\leqslant n$) state $\rho$, the \emph{geometric discord (GD)}
is a variant of quantum discord which is defined as
\begin{align} \label{dis}
\mathcal{D}(\rho)&=\frac{m}{m-1}\min_{\chi\in\Omega_0}\|\rho-\chi\|^2 \notag\\
&=\frac{m}{m-1}\min_{\Pi^A}\|\rho-\Pi^A(\rho)\|^2,
\end{align}
where $\|C\|^2=\tr(C^{\dagger}C)$ is the Hilbert-Schmidt norm for any matrix $C$, and
in the last equality (cf. \cite{LF10:GMQD}), the minimization is over all possible
von Neumann measurements $\Pi^A=\{\Pi_k^A\}$ (that is, a set consisting of
one-dimensional orthogonal projectors summing to the identity) on $\rho^A$ and
$\Pi^A(\rho):=\sum_k(\Pi_k^A\otimes I^B)\rho(\Pi_k^A\otimes I^B)$. Here, $A$ and $B$
have been used to indicate the first part $\mathbb{C}^m$ and the second part
$\mathbb{C}^n$, respectively, of the system $\mathbb{C}^m\otimes\mathbb{C}^n$, and
$\rho^A$ is the marginal of $\rho$ on $\mathbb{C}^m$.
Let $\boldsymbol{\zeta}=(\zeta_1,\zeta_2,\ldots,\zeta_{d^2-1})^{\T}$ and $\zeta_i$ be
the generators of ${\rm SU}(d)$ for dimension $d=m$ or $n$.
Suppose $\rho$ have the Bloch form
\begin{equation}
\rho=\frac{1}{mn}\left[I_m\otimes I_n+\bx^{\T}\boldsymbol{\zeta}\otimes I_n+
I_m\otimes\by^{\T}\boldsymbol{\zeta}+\sum T_{ij}\zeta_i\otimes\zeta_j\right].
\label{blf}
\end{equation}
The following inequality, derived in \cite{RP12:TLBGD},
\begin{equation}
\mathcal{D}(\rho)\geqslant\frac{2}{m(m-1)n}\left[\|\bx\|^2+\frac{2}{n}\|T\|^2-
\sum_{k=1}^{m-1}\lambda_k^{\downarrow}(G)\right] \label{bgd}
\end{equation}
gives a tight lower bound on GD, where $\lambda_k^{\downarrow}(G)$ denotes the
eigenvalues of $G:=\bx\bx^{\T}+\dfrac{2}{n}TT^{\T}$ sorted in nonincreasing order.
The equality holds in \eqref{bgd} for all $2\otimes n$ states
(cf. \cite{RP12:TLBGD}, \cite{VR12:QDQQS}).

To calculate GD of a given $2\otimes 3$ state, we use the formula given by the
equality in \eqref{bgd}. For this we first find the Bloch form \eqref{blf} for the
given state by fixing the generators of ${\rm SU}(2)$ as the Pauli matrices, namely,
\[\sigma_1=\begin{pmatrix}
0 & 1\\
1 & 0
\end{pmatrix}, \,\,
\sigma_2=\begin{pmatrix}
0 & -\iota\\
\iota & 0
\end{pmatrix}, \,\,
\sigma_3=\begin{pmatrix}
1 & 0\\
0 & -1
\end{pmatrix},\]
and the generators of ${\rm SU}(3)$ as the Gell-Mann matrices, namely,
\[\mu_1=\begin{pmatrix}
0 & 1 & 0\\
1 & 0 & 0\\
0 & 0 & 0
\end{pmatrix}, \,\,
\mu_2=\begin{pmatrix}
0 & -\iota & 0\\
\iota & 0 & 0\\
0 & 0 & 0
\end{pmatrix}, \,\,
\mu_3=\begin{pmatrix}
1 & 0 & 0\\
0 & -1 & 0\\
0 & 0 & 0
\end{pmatrix},\]
\[\mu_4=\begin{pmatrix}
0 & 0 & 1\\
0 & 0 & 0\\
1 & 0 & 0
\end{pmatrix}, \,\,
\mu_5=\begin{pmatrix}
0 & 0 & -\iota\\
0 & 0 & 0\\
\iota & 0 & 0
\end{pmatrix}, \,\,
\mu_6=\begin{pmatrix}
0 & 0 & 0\\
0 & 0 & 1\\
0 & 1 & 0
\end{pmatrix},\]
\[\mu_7=\begin{pmatrix}
0 & 0 & 0\\
0 & 0 & -\iota\\
0 & \iota & 0
\end{pmatrix}, \,\,
\mu_8=\frac{1}{\sqrt{3}}\begin{pmatrix}
1 & 0 & 0\\
0 & 1 & 0\\
0 & 0 & -2
\end{pmatrix}.\]

Denote the linear partial transpose on $M_m\otimes M_n$ by
$\Gamma=\T\otimes {\rm id}_n$, that is, it maps $\ket{i}\bra{j}\otimes\ket{k}\bra{l}$
to $\ket{j}\bra{i}\otimes\ket{k}\bra{l}$.
A popular measure for the  entanglement of a state $\rho$ is the \emph{negativity}
(cf. \cite{VW02:CME}) of $\rho$, which is defined as
\begin{equation}
\nn(\rho)=\frac{\|\rho^{\Gamma}\|_1-1}{m-1}=\frac{2}{m-1}
\sum_{\lambda_i<0}|\lambda_i(\rho^{\Gamma})|, \label{neg}
\end{equation}
where  $\|X\|_1$ denotes the trace norm given by $\|X\|_1=\tr|X|$.

For $2\otimes 2$ states, it is shown in \cite{GA11:ICME}, that
$\mathcal{D}\geqslant\nn^2$. S. Rana and P. Parashar discussed $2\otimes n$ case in
\cite{RP12:ELBGD}, and justified that this $\mathcal{D}\geqslant\nn^2$ does not hold
for $n>2$. But they also deduced that the occurrence of the violation of this
inequality is very rare for $n=3$. The example of the state provided by them for
$n=3$ for which $\mathcal{D}<\nn^2$ involves long decimals and complex entries. The
quantity $\nn^2-\mathcal{D}$ is very small in their example.
We find several parametrized families of $2\otimes 3$ states when $\mathcal{D}<\nn^2$
in Section~\ref{GDN}. We have examples where the entries of the matrices are simple
fractions or integers and $\nn^2-\mathcal{D}$ is not so small.

S. Rana in \cite{Ran13:NEPT} proved that for any $m\otimes n$ state $\rho$, the
number $(m-1)(n-1)$ is an upper bound for the number of negative eigenvalues of
$\rho^{\Gamma}$. Later N. Johnston in \cite{Jon13:NPPTS} showed that for all $m$ and
$n$, there exists an $m\otimes n$ state $\rho$ such that $\rho^{\Gamma}$ has
$(m-1)(n-1)$ negative eigenvalues. All parametrized families of $2\otimes 3$ states
$\rho$ in Section~\ref{GDN} for which $\mathcal{D}<\nn^2$ are such that
$\rho^{\Gamma}$ have the maximum possible negative eigenvalues.

In Section~\ref{BGDN}, we show that for an $m\otimes n$ state $\rho$
\begin{equation} \label{bds}
-\frac{m}{m-1}\leqslant\nn(\rho)^2-\gd(\rho)\leqslant 1.
\end{equation}
This makes use of convexity and some other property of negativity $\nn$ of states
that was established by G. Vidal and R. F. Werner in \cite{VW02:CME}.

\section{Examples of \texorpdfstring{$2\otimes 3$}{2x3} States with
\texorpdfstring{$\mathcal{D}<\nn^2$}{D<N2}}
\label{GDN}

For $a,b\in\mathbb{R}$ with $b>0$, consider the following element in
$M_2\otimes M_3$:
\begin{equation}
\rho_1(a,b)=\frac{1}{2(a^2+b^2)}\begin{pmatrix}
a^2 & 0 & 0 & 0 & ab & 0\\
0 & b^2 & 0 & 0 & 0 & ab\\
0 & 0 & 0 & 0 & 0 & 0\\
0 & 0 & 0 & 0 & 0 & 0\\
ab & 0 & 0 & 0 & b^2 & 0\\
0 & ab & 0 & 0 & 0 & a^2
\end{pmatrix}. \label{rho1ab}
\end{equation}
The eigenvalues of $\rho_1(a,b)$ are $0$ with multiplicity $4$ and $\dfrac{1}{2}$
with multiplicity $2$. Since $\rho_1(a,b)$ is Hermitian, this guarantees that
$\rho_1(a,b)$ is a state. The eigenvalues of $\rho_1(a,b)^{\Gamma}$ are
$\dfrac{a^2}{2(a^2+b^2)}$, $\dfrac{b^2+b\sqrt{b^2+4a^2}}{4(a^2+b^2)}$ and
$\dfrac{b^2-b\sqrt{b^2+4a^2}}{4(a^2+b^2)}$, all with multiplicity $2$. Clearly,
it has two negative eigenvalues for $a\neq 0$, namely,
$\dfrac{b^2-b\sqrt{b^2+4a^2}}{4(a^2+b^2)}$, with multiplicity $2$. This is the
maximum number of negative eigenvalues possible for the partial transpose of any
$2\otimes 3$ state.
The negativity of $\rho_1(a,b)$ can be calculated from the second equality of
\eqref{neg} as 
\begin{equation} \label{nr1}
\nn(\rho_1(a,b))=\dfrac{b\sqrt{b^2+4a^2}-b^2}{a^2+b^2}
\end{equation}
for $a\neq 0$. Observe that $\nn(\rho_1(a,b))=0$ when $a=0$, and this coincides with
the expression for $\nn(\rho_1(a,b))$ in $\eqref{nr1}$. Therefore
$\nn(\rho_1(a,b))^2=\dfrac{2b^4+4a^2b^2-2b^3\sqrt{b^2+4a^2}}{(a^2+b^2)^2}$.

For the state $\rho_1(a,b)$ in \eqref{rho1ab}, the Bloch form is determined by
$\bx^{\T}=(0,0,0)$, $\by^{\T}=\left(0,0,\dfrac{3a^2-6b^2}{4(a^2+b^2)},0,0,0,0,
-\dfrac{(a^2-2b^2)\sqrt{3}}{4(a^2+b^2)}\right)$ and
\begin{equation*}
T=\begin{pmatrix}
\dfrac{3ab}{2(a^2+b^2)} & 0 & 0 & 0 & 0 & \dfrac{3ab}{2(a^2+b^2)} & 0 & 0\\
0 & -\dfrac{3ab}{2(a^2+b^2)} & 0 & 0 & 0 & 0 & -\dfrac{3ab}{2(a^2+b^2)} & 0\\
0 & 0 & \dfrac{3a^2}{4(a^2+b^2)} & 0 & 0 & 0 & 0 & \dfrac{3\sqrt{3}a^2}{4(a^2+b^2)}
\end{pmatrix}.
\end{equation*}
Thus,
\begin{equation*}
TT^{\T}=\begin{pmatrix}
\dfrac{9a^2b^2}{2(a^2+b^2)^2} & 0 & 0\\
0 & \dfrac{9a^2b^2}{2(a^2+b^2)^2} & 0\\
0 & 0 & \dfrac{9a^4}{4(a^2+b^2)^2}
\end{pmatrix},
\end{equation*}
and hence
\begin{equation*}
G=\begin{pmatrix}
\dfrac{3a^2b^2}{(a^2+b^2)^2} & 0 & 0\\
0 & \dfrac{3a^2b^2}{(a^2+b^2)^2} & 0\\
0 & 0 & \dfrac{3a^4}{2(a^2+b^2)^2}
\end{pmatrix}.
\end{equation*}
If $a^2\geqslant 2b^2$, then the eigenvalues of $G$ in nonincreasing order are
$\dfrac{3a^4}{2(a^2+b^2)^2},\dfrac{3a^2b^2}{(a^2+b^2)^2}$
and $\dfrac{3a^2b^2}{(a^2+b^2)^2}$. Also,
$\|T\|^2=\tr(T^{\dagger}T)=\tr(TT^{\dagger})=\tr(TT^{\T})=
\dfrac{36a^2b^2+9a^4}{4(a^2+b^2)^2}$. Thus, the GD of $\rho_1(a,b)$ is
$\mathcal{D}(\rho_1(a,b))=\dfrac{2a^2b^2}{(a^2+b^2)^2}$,
when $a^2\geqslant 2b^2$. Similarly, it can be shown that $\mathcal{D}(\rho_1(a,b))=
\dfrac{a^4+2a^2b^2}{2(a^2+b^2)^2}$, when $a^2\leqslant 2b^2$.
Therefore, for $a^2>2b^2$, $\nn(\rho_1(a,b))^2-\mathcal{D}(\rho_1(a,b))=
\dfrac{2b^4+2a^2b^2-2b^3\sqrt{b^2+4a^2}}{(a^2+b^2)^2}>0$.
Thus, for $a^2>2b^2$ with $b>0$, the state $\rho_1(a,b),$
violates the inequality $\mathcal{D}\geqslant\nn^2$. In particular, for $a=5$ and
$b=2$, we obtain the following state:
\begin{equation*}
\rho_1(5,2)=\frac{1}{58}\begin{pmatrix}
25 & 0 & 0 & 0 & 10 & 0\\
0 & 4 & 0 & 0 & 0 & 10\\
0 & 0 & 0 & 0 & 0 & 0\\
0 & 0 & 0 & 0 & 0 & 0\\
10 & 0 & 0 & 0 & 4 & 0\\
0 & 10 & 0 & 0 & 0 & 25
\end{pmatrix}
\end{equation*}
which violates the inequality $\gd\geqslant\nn^2$. Here, $\nn(\rho_1(5,2))^2-
\mathcal{D}(\rho_1(5,2))=\dfrac{232-32\sqrt{26}}{841}\approx 0.0818$.
Let $c$ denote $\dfrac{a}{b}$. Then, $\nn(\rho_1(a,b))^2$ takes the form
$\dfrac{4c^2+2-2\sqrt{4c^2+1}}{(c^2+1)^2}$ for all $c\in\rr$, and
\[\gd(\rho_1(a,b))=\begin{cases}
\dfrac{2c^2}{(c^2+1)^2} & \mbox{when } c^2\geqslant 2,\\
\dfrac{c^4+2c^2}{2(c^2+1)^2} & \mbox{when } c^2\leqslant 2.
\end{cases}
\]
Using this representation, we obtain the following figure:
\begin{figure}[H]
	\begin{center}
		\includegraphics[width=4.25in]{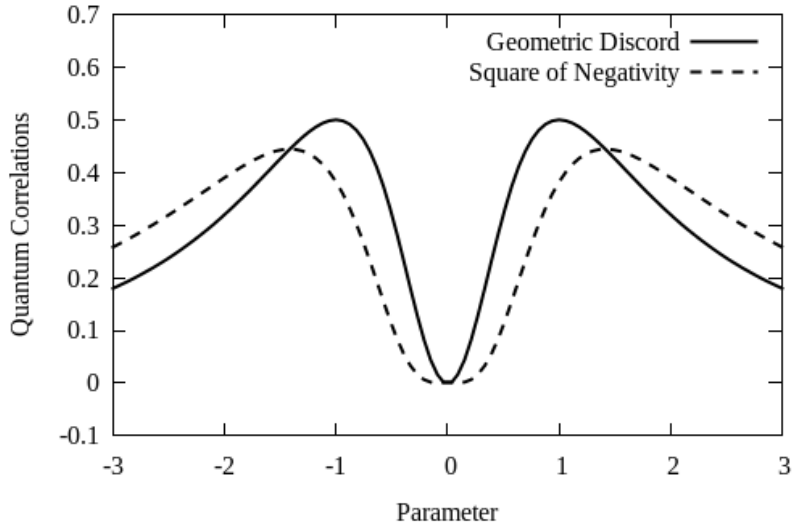}
	\end{center}
	\caption{Geometric Discord and Square of Negativity for $\rho_1(a,b)$.}
\end{figure}

Similarly, for the following three families of $2 \otimes 3$ states, it can be shown
that the inequality $\mathcal{D}\geqslant\nn^2$ is violated:
The first family is given by
\begin{equation*}
\rho_2(a)=\frac{1}{8a+2}\begin{pmatrix}
3a+1 & 0 & 0 & 0 & 2a & 0\\
0 & a & 0 & 0 & 0 & 2a\\
0 & 0 & 0 & 0 & 0 & 0\\
0 & 0 & 0 & 0 & 0 & 0\\
2a & 0 & 0 & 0 & a & 0\\
0 & 2a & 0 & 0 & 0 & 3a+1
\end{pmatrix}, 
\quad\mbox{where $0<a\leqslant 1$,}
\end{equation*}
for which the graphs of $\gd$ and $\nn^2$ are illustrated in the following figure:
\begin{figure}[H]
	\begin{center}
		\includegraphics[width=4.25in]{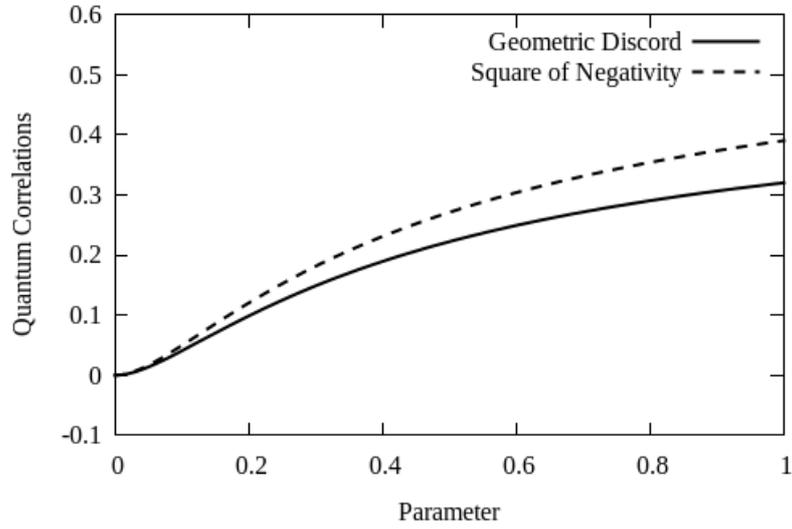}
	\end{center}
	\caption{Geometric Discord and Square of Negativity for $\rho_2(a)$.}
\end{figure}
The second family is given by
\begin{equation*}
\rho_3(a)=\frac{1}{8a+2}\begin{pmatrix}
3a+1 & 0 & 0 & 0 & 2a-1 & 0\\
0 & a & 0 & 0 & 0 & 2a-1\\
0 & 0 & 0 & 0 & 0 & 0\\
0 & 0 & 0 & 0 & 0 & 0\\
2a-1 & 0 & 0 & 0 & a & 0\\
0 & 2a-1 & 0 & 0 & 0 & 3a+1
\end{pmatrix}, 
\quad\mbox{where $\dfrac{7}{4}\leqslant a\leqslant \dfrac{19}{4}$,}
\end{equation*}
for which the graphs of $\gd$ and $\nn^2$ are illustrated in the following figure:
\begin{figure}[H]
	\begin{center}
		\includegraphics[width=4.25in]{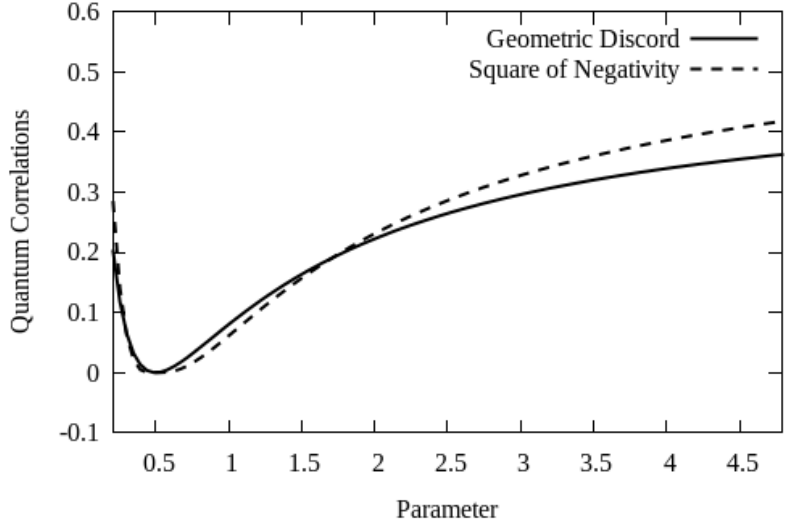}
	\end{center}
	\caption{Geometric Discord and Square of Negativity for $\rho_3(a)$.}
\end{figure}
Finally, the third family is given by
\begin{equation*}
\rho_4(a)=\frac{1}{8a+2}\begin{pmatrix}
3a+1 & 0 & 0 & 0 & 2a-2 & 0\\
0 & a & 0 & 0 & 0 & 2a-2\\
0 & 0 & 0 & 0 & 0 & 0\\
0 & 0 & 0 & 0 & 0 & 0\\
2a-2 & 0 & 0 & 0 & a & 0\\
0 & 2a-2 & 0 & 0 & 0 & 3a+1
\end{pmatrix}, 
\quad\mbox{where $\dfrac{7}{2}\leqslant a\leqslant \dfrac{17}{2}$,}
\end{equation*}
for which the graphs of $\gd$ and $\nn^2$ are illustrated in the following figure:
\begin{figure}[H]
	\begin{center}
		\includegraphics[width=4.25in]{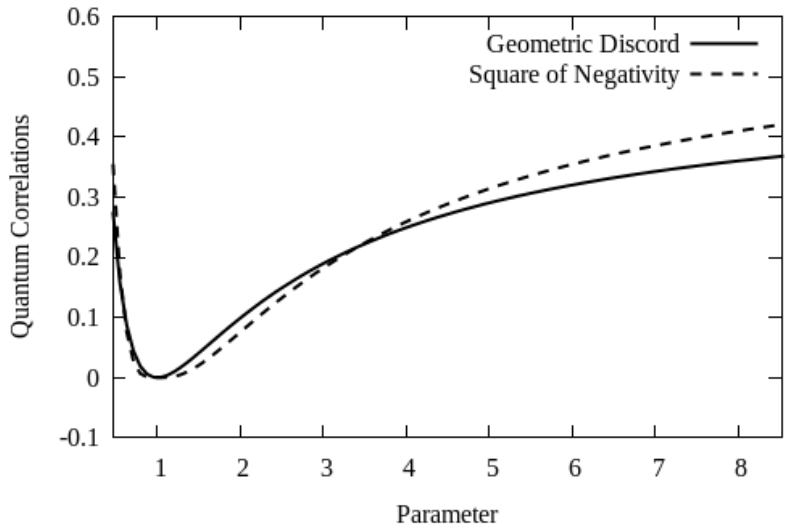}
	\end{center}
	\caption{Geometric Discord and Square of Negativity for $\rho_4(a)$.}
\end{figure}

\section{Bounds for \texorpdfstring{$\nn^2-\gd$}{N2-D}}
\label{BGDN}

In this section, we first obtain bounds for $\nn$ and $\gd$ individually of an
$m\otimes n$ state which together give bounds for $\nn^2-\gd$. We assume throughout
the section that $m\leqslant n$. From the definition, it follows that both of $\nn$
and $\gd$ are nonnegative, and which give obvious lower bounds $0$ for both of them.
An upper bound for $\gd$ is obtained in the next proposition.

\begin{proposition} \label{ubfgd}
For any $m\otimes n$ state $\rho$,
\[0\leqslant\gd(\rho)\leqslant\frac{m}{m-1}.\]
\end{proposition}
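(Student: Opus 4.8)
The lower bound is immediate: both $\nn$ and $\gd$ are nonnegative, as already observed, so $\gd(\rho)\geqslant 0$. For the upper bound I would work from the measurement form of the geometric discord in \eqref{dis}, namely $\gd(\rho)=\frac{m}{m-1}\min_{\Pi^A}\|\rho-\Pi^A(\rho)\|^2$. Since the right-hand side involves a minimum, it suffices to produce a bound $\|\rho-\Pi^A(\rho)\|^2\leqslant 1$ valid for every von Neumann measurement $\Pi^A$; the whole problem then reduces to showing $\min_{\Pi^A}\|\rho-\Pi^A(\rho)\|^2\leqslant 1$.

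The key step is to recognize the map $\Pi^A(\cdot)=\sum_k(\Pi_k^A\otimes I^B)(\cdot)(\Pi_k^A\otimes I^B)$ as an orthogonal projection with respect to the Hilbert-Schmidt inner product $\braket{X,Y}=\tr(X^{\dagger}Y)$. I would verify two properties. First, idempotency: using $\Pi_j^A\Pi_k^A=\delta_{jk}\Pi_k^A$, one checks directly that $\Pi^A\circ\Pi^A=\Pi^A$. Second, self-adjointness: $\braket{\Pi^A(X),Y}=\braket{X,\Pi^A(Y)}$, which follows from the cyclicity of the trace together with the Hermiticity of each projector $\Pi_k^A\otimes I^B$. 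Granting these, $\Pi^A$ is an orthogonal projection, so the Pythagorean identity gives $\|\rho-\Pi^A(\rho)\|^2=\|\rho\|^2-\|\Pi^A(\rho)\|^2\leqslant\|\rho\|^2$. (Equivalently, avoiding projection language: expanding the square and using the two properties yields $\tr(\rho\,\Pi^A(\rho))=\tr(\Pi^A(\rho)^2)$, hence the identity $\|\rho-\Pi^A(\rho)\|^2=\tr(\rho^2)-\tr(\Pi^A(\rho)^2)$, from which the same bound follows.)

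It then remains to bound $\|\rho\|^2=\tr(\rho^2)$. Since $\rho$ is a density matrix, its eigenvalues $\lambda_i$ are nonnegative and sum to $1$, so $\tr(\rho^2)=\sum_i\lambda_i^2\leqslant\bigl(\sum_i\lambda_i\bigr)^2=1$, which is the purity bound. Combining the steps, $\min_{\Pi^A}\|\rho-\Pi^A(\rho)\|^2\leqslant\tr(\rho^2)\leqslant 1$, and multiplying by $\frac{m}{m-1}$ gives $\gd(\rho)\leqslant\frac{m}{m-1}$, as required.

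The only nonroutine point is the verification that $\Pi^A$ is a Hilbert-Schmidt orthogonal projection; once that is in place, both the Pythagorean step and the purity bound are entirely elementary. I expect no genuine obstacle here, since the projector relation $\Pi_j^A\Pi_k^A=\delta_{jk}\Pi_k^A$ makes idempotency and self-adjointness short calculations, and neither step uses any special structure of $\rho$ beyond its being a state.
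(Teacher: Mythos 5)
Your proposal is correct and follows essentially the same route as the paper: the paper's proof establishes exactly your key identity $\tr(\rho\,\Pi^A(\rho))=\tr((\Pi^A(\rho))^2)$ by direct expansion using orthogonality of the projectors and cyclicity of the trace, deduces $\|\rho-\Pi^A(\rho)\|^2=\tr(\rho^2)-\tr((\Pi^A(\rho))^2)\leqslant 1$, and multiplies by $\frac{m}{m-1}$. Your packaging of this computation as ``$\Pi^A$ is a Hilbert--Schmidt orthogonal projection, so Pythagoras applies'' is a cleaner conceptual framing, but it is the same argument, as you yourself note in your parenthetical remark.
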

\begin{proof}
Here, we use the second expression for $\gd$ in \eqref{dis} to find the upper bound.
Let $\Pi^A$ be a von Neumann measurement on $\rho^{A}$, that is, there is an \onb
$\{\ket{\psi_k} : 1 \leqslant k \leqslant m\}$ of $\cc^m$ such that
$\Pi^A(\rho)=\sum\limits_{k=1}^m (\ket{\psi_k}\bra{\psi_k}\otimes I_n)
\rho(\ket{\psi_k}\bra{\psi_k}\otimes I_n)$.
For any von Neumann measurement $\Pi^A$, we have the following:
\begin{align*}
&\tr((\Pi^A(\rho))^2)\\
=&\tr\left(\sum_{k=1}^m(\ket{\psi_k}\bra{\psi_k}\otimes I_n)\rho(\ket{\psi_k}
\bra{\psi_k}\otimes I_n)\sum_{l=1}^m(\ket{\psi_l}\bra{\psi_l}\otimes I_n)
\rho(\ket{\psi_l}\bra{\psi_l}\otimes I_n)\right)\\
=&\sum_{k=1}^m\tr((\ket{\psi_k}\bra{\psi_k}\otimes I_n)\rho(\ket{\psi_k}\bra{\psi_k}
\otimes I_n)\rho(\ket{\psi_k}\bra{\psi_k}\otimes I_n))\\
=&\sum_{k=1}^m\tr(\rho(\ket{\psi_k}\bra{\psi_k}\otimes I_n)
\rho(\ket{\psi_k}\bra{\psi_k}\otimes I_n))\\
=&\tr(\rho\Pi^A(\rho)).
\end{align*}
Thus,
\begin{align*}
\|\rho - \Pi^A(\rho)\|^2 =& \tr(\rho^2) - 2\tr(\rho\Pi^A(\rho)) +
\tr((\Pi^A(\rho))^2)\\
=& \tr(\rho^2) - 2\tr((\Pi^A(\rho))^2) + \tr(\rho\Pi^A(\rho))\\
=& \tr(\rho^2) - \tr((\Pi^A(\rho))^2) \leqslant 1.
\end{align*}
Therefore $0 \leqslant \gd(\rho) \leqslant \dfrac{m}{m-1}$.
\end{proof}

For any $m\otimes n$ pure state $\rho$ it is proved in
\cite[Proposition~1]{LF12:EGMQD} that $\gd(\rho)\leqslant 1$.

\begin{lemma}
$\nn(\rho) \leqslant 1$ and $\gd(\rho) \leqslant 1$ for any $m \otimes n$
pure state $\rho$.
\end{lemma}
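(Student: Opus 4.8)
The bound $\gd(\rho)\leqslant 1$ is exactly the statement quoted just above from \cite[Proposition~1]{LF12:EGMQD}, so the only work is to establish $\nn(\rho)\leqslant 1$. The plan is to diagonalize the partial transpose of a pure state by means of its Schmidt decomposition and then to apply the Cauchy--Schwarz inequality.

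First I would record that $\nn$ is invariant under local unitaries. A direct computation on elementary tensors gives the identity
\[
\bigl[(U\otimes V)\rho(U\otimes V)^{\dagger}\bigr]^{\Gamma}=(\bar{U}\otimes V)\,\rho^{\Gamma}\,(\bar{U}\otimes V)^{\dagger},
\]
so conjugating $\rho$ by a local unitary $U\otimes V$ conjugates $\rho^{\Gamma}$ by the unitary $\bar{U}\otimes V$, leaving the spectrum of $\rho^{\Gamma}$, and hence $\nn$, unchanged. This lets me assume $\rho=\ket{\psi}\bra{\psi}$ with $\ket{\psi}=\sum_{i=1}^{m}\sqrt{p_i}\,\ket{i}\otimes\ket{i}$ in Schmidt form, where $p_i\geqslant 0$ and $\sum_{i=1}^{m}p_i=1$.

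Next I would compute the spectrum of $\rho^{\Gamma}$. Writing $\rho=\sum_{i,j}\sqrt{p_ip_j}\,\ket{i}\bra{j}\otimes\ket{i}\bra{j}$ and applying $\Gamma=\T\otimes\mathrm{id}_n$, the diagonal terms ($i=j$) are fixed and contribute the eigenvalues $p_i$, whereas for each unordered pair $i<j$ the two relevant terms couple only the two-dimensional space spanned by $\ket{i}\otimes\ket{j}$ and $\ket{j}\otimes\ket{i}$ through a block of the form $\sqrt{p_ip_j}\left(\begin{smallmatrix}0&1\\1&0\end{smallmatrix}\right)$, contributing the eigenvalues $\pm\sqrt{p_ip_j}$. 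Hence the negative eigenvalues of $\rho^{\Gamma}$ are exactly $-\sqrt{p_ip_j}$ for $i<j$, and the second equality in \eqref{neg} would give
\[
\nn(\rho)=\frac{2}{m-1}\sum_{i<j}\sqrt{p_ip_j}=\frac{1}{m-1}\left[\Bigl(\sum_{i=1}^{m}\sqrt{p_i}\Bigr)^{2}-1\right].
\]

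Finally, Cauchy--Schwarz gives $\bigl(\sum_{i=1}^{m}\sqrt{p_i}\bigr)^{2}\leqslant m\sum_{i=1}^{m}p_i=m$, so that $\nn(\rho)\leqslant 1$, with equality precisely when all $p_i=1/m$, i.e. for the maximally entangled state. The only genuinely computational step is the block-diagonalization of $\rho^{\Gamma}$, and I expect identifying these invariant two-dimensional blocks to be the single point needing care; the Schmidt reduction and the closing estimate are routine.
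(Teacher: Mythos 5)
Your proof is correct, and its skeleton---Schmidt decomposition followed by Cauchy--Schwarz---is the same as the paper's; the difference lies in what is cited versus what is derived. For the negativity half, the paper simply invokes \cite[Proposition~8]{VW02:CME} for the pure-state formula \eqref{tnn}, i.e. $\widetilde{\nn}(\rho)=\frac{1}{2}\left[\left(\sum_i c_i\right)^2-1\right]$, and then applies Cauchy--Schwarz exactly as you do; you instead re-derive this formula from scratch, via the covariance $[(U\otimes V)\rho(U\otimes V)^{\dagger}]^{\Gamma}=(\bar{U}\otimes V)\,\rho^{\Gamma}\,(\bar{U}\otimes V)^{\dagger}$ and the explicit block-diagonalization of $\rho^{\Gamma}$ (eigenvalue $p_i$ on $\ket{i}\otimes\ket{i}$, eigenvalues $\pm\sqrt{p_ip_j}$ on ${\rm span}\{\ket{i}\otimes\ket{j},\ket{j}\otimes\ket{i}\}$ for $i<j$, zero elsewhere). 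That computation is right, and padding the Schmidt coefficients with zeros up to $m$ terms is harmless, so your argument makes the negativity bound self-contained where the paper's rests on a citation. For the discord half the roles are reversed: you quote $\gd(\rho)\leqslant 1$ wholesale from \cite[Proposition~1]{LF12:EGMQD} (which the paper itself licenses in the sentence preceding the lemma), while the paper's proof quotes only the formula \eqref{tgd}, $\widetilde{\gd}(\rho)=1-\sum_i c_i^4$, from that reference and deduces the bound---a deduction which, incidentally, silently uses $\sum_i c_i^4\geqslant 1/l\geqslant 1/m$, one more application of Cauchy--Schwarz. Both treatments are sound; yours buys a fully self-contained negativity argument at the cost of outsourcing the discord bound entirely.
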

\begin{proof}
Since $\rho$ is a pure state, $\rho = |\Phi\rangle \langle\Phi|$ for some vector
$|\Phi\rangle$ with Schmidt decomposition form
\begin{equation} \label{sc}
|\Phi\rangle = \sum_{i=1}^lc_i |e_i\rangle \otimes |f_i\rangle
\end{equation}
for some orthonormal sets $\{|e_i\rangle : 1 \leqslant i \leqslant l\}$ and
$\{|f_i\rangle : 1 \leqslant i \leqslant l\}$, where $1 \leqslant l \leqslant m$ and
$c_i > 0$ for $1 \leqslant i \leqslant l$. From \cite[Proposition~8]{VW02:CME} we
obtain
\begin{equation} \label{tnn}
\widetilde{\nn}(\rho) = \frac{1}{2}\left[\left(\sum_{i=1}^lc_i\right)^2 -1\right],
\end{equation}
where $\widetilde{\nn}(\rho) = \dfrac{m-1}{2} \nn(\rho)$.
Since $\sum\limits_{i=1}^lc_i^2 = 1$, we obtain
\begin{equation*}
\sum_{i=1}^lc_i \leqslant \left(\sum_{i=1}^l1\right)^{\frac{1}{2}}
\left(\sum_{i=1}^lc_i^2\right)^{\frac{1}{2}} = l^{\frac{1}{2}}.
\end{equation*}
Therefore $\nn(\rho) \leqslant \dfrac{1}{m-1}(l - 1) \leqslant 1$.
It is proved in \cite[Proposition~1]{LF12:EGMQD} that
\begin{equation} \label{tgd}
\widetilde{\gd}(\rho) = 1 - \sum_{i=1}^l c^4_i,
\end{equation}
where $\widetilde{\gd}(\rho) = \dfrac{m-1}{m} \gd(\rho)$.
It follows that $\gd(\rho) \leqslant 1$.
\end{proof}

For $m, n \in \mathbb{N}$ with $2 \leqslant m \leqslant n$, let $\rho_{m,n,max} =
\displaystyle\dfrac{1}{m}\sum\limits_{i,j=1}^m e_{ij} \otimes f_{ij}$ in
$M_m\otimes M_n$, where $\{e_{ij}\}_{i,j=1}^m$ and $\{f_{ij}\}_{i,j=1}^n$ are
matrix units for $M_m$ and $M_n$, respectively. When $m = n$, the state
$\rho_{n,n,max}$ is a maximal entanglement.
Since for $\rho_{m,n,max}$ in the decomposition \eqref{sc} $l=m$ and $c_i =
\dfrac{1}{\sqrt{m}}$ for $1 \leqslant i \leqslant m$, it follows from \eqref{tnn}
and \eqref{tgd} that $\nn (\rho_{m,n,max}) = 1$ and $\gd(\rho_{m,n,max})=1$. Thus,
\[\nn(\rho_{m,n,max})^2 - \gd(\rho_{m,n,max})=0.\]

\begin{corollary} \label{cor:negativity}
If $\rho$ is an $m\otimes n$ state, then $\nn(\rho) \leqslant 1$.
\end{corollary}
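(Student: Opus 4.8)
The plan is to deduce the bound for an arbitrary state from the pure-state case settled in the preceding Lemma, by exploiting the convexity of the negativity. First I would record that $\nn$ is convex on the set of $m\otimes n$ states. Indeed, since the partial transpose $\Gamma$ is linear and the trace norm obeys the triangle inequality, for any states $\rho_1,\rho_2$ and any $\lambda\in[0,1]$ we have
\[
\|(\lambda\rho_1+(1-\lambda)\rho_2)^{\Gamma}\|_1
\leqslant\lambda\|\rho_1^{\Gamma}\|_1+(1-\lambda)\|\rho_2^{\Gamma}\|_1.
\]
Subtracting $1=\lambda+(1-\lambda)$ from both sides and dividing by $m-1$ turns this into $\nn(\lambda\rho_1+(1-\lambda)\rho_2)\leqslant\lambda\,\nn(\rho_1)+(1-\lambda)\,\nn(\rho_2)$, which is exactly the convexity property of negativity recorded in \cite{VW02:CME}.

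Next, for an arbitrary $m\otimes n$ state $\rho$ I would use its spectral decomposition to write $\rho=\sum_k p_k\,\Phi_k$ as a convex combination of pure states $\Phi_k=|\Phi_k\rangle\langle\Phi_k|$, with $p_k\geqslant 0$ and $\sum_k p_k=1$. Applying convexity and then the Lemma, which supplies $\nn(\Phi_k)\leqslant 1$ for each pure state, gives
\[
\nn(\rho)\leqslant\sum_k p_k\,\nn(\Phi_k)\leqslant\sum_k p_k=1,
\]
which is the desired inequality.

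There is no serious obstacle here: both ingredients—the pure-state estimate and the convexity of $\nn$—are already at hand, and the convexity itself is immediate from the triangle inequality for the trace norm together with the linearity of $\Gamma$. The only point worth verifying is that the normalizing factor $1/(m-1)$ in the definition \eqref{neg} is the same for every $m\otimes n$ state irrespective of its rank, so that the convexity inequality applies directly to $\nn$ without any rescaling between the pure and mixed contributions; this is indeed the case, and the argument goes through cleanly.
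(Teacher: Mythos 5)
Your proposal is correct and follows essentially the same route as the paper: decompose the state as a convex combination of pure states, invoke the convexity of $\nn$ (which the paper cites from \cite[Proposition~1]{VW02:CME} and you instead verify directly from the linearity of $\Gamma$ and the triangle inequality for the trace norm), and apply the pure-state bound from the preceding Lemma. The extra verification of convexity is sound but not a different argument, so the two proofs match in substance.
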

\begin{proof}
Since any state can be written as a convex sum of pure states and the convexity
of $\nn$ by \cite[Proposition~1]{VW02:CME}, we have the conclusion.
\end{proof}

We conclude that the negativity $\nn$ achieves its maximum value on the states
$\rho_{m,n,max}$.

\begin{theorem}
For any $m\otimes n$ state $\rho$
\[-\frac{m}{m-1} \leqslant \nn(\rho)^2 - \gd(\rho) \leqslant 1.\]
\end{theorem}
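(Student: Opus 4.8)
The plan is to establish the two-sided bound by decoupling $\nn(\rho)^2 - \gd(\rho)$ into its two constituent terms and bounding each separately; no joint analysis of the two quantities is needed, because the signs cooperate. The essential inputs are all already available: the upper bound $\gd(\rho) \leqslant \frac{m}{m-1}$ from Proposition~\ref{ubfgd}, the bound $\nn(\rho) \leqslant 1$ from Corollary~\ref{cor:negativity}, and the elementary nonnegativity $\nn(\rho) \geqslant 0$ and $\gd(\rho) \geqslant 0$ recorded at the beginning of this section.

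For the upper bound I would use that $\nn(\rho) \leqslant 1$ forces $\nn(\rho)^2 \leqslant 1$, while $\gd(\rho) \geqslant 0$ gives $-\gd(\rho) \leqslant 0$; adding these yields $\nn(\rho)^2 - \gd(\rho) \leqslant 1$. For the lower bound I would combine $\nn(\rho)^2 \geqslant 0$ with $\gd(\rho) \leqslant \frac{m}{m-1}$ to obtain $\nn(\rho)^2 - \gd(\rho) \geqslant -\frac{m}{m-1}$. Putting the two estimates together gives the asserted inequality.

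The argument presents no genuine obstacle, since all the substantive work is contained in Proposition~\ref{ubfgd} and Corollary~\ref{cor:negativity}, whose proofs rest respectively on the trace-norm computation for von Neumann measurements and on the convexity of $\nn$. The only point deserving a moment's care is that both individual bounds are stated for \emph{every} $m\otimes n$ state, so they may legitimately be applied to one and the same $\rho$; the signs then conspire so that no interaction between $\nn$ and $\gd$ must be tracked. Whether the extreme values $1$ and $-\frac{m}{m-1}$ are actually attained is a separate question not resolved by this argument: one already knows from the maximally entangled states $\rho_{m,n,max}$ that $\nn^2 - \gd$ can equal $0$, and from the families in Section~\ref{GDN} that it can be made strictly positive.
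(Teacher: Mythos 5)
Your proof is correct and is essentially the same as the paper's: the paper states that the theorem is immediate from Proposition~\ref{ubfgd} and Corollary~\ref{cor:negativity}, which is exactly the decoupling you spell out (combining $\nn(\rho)^2\leqslant 1$ with $\gd(\rho)\geqslant 0$ for the upper bound, and $\nn(\rho)^2\geqslant 0$ with $\gd(\rho)\leqslant \frac{m}{m-1}$ for the lower bound). Your version simply makes the implicit sign bookkeeping explicit.
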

\begin{proof}
The proof is immediate from Proposition~\ref{ubfgd} and
Corollary~\ref{cor:negativity}.
\end{proof}

Note that $|\nn(\rho)^2 - \gd(\rho)| \leqslant 1$ for any pure $m\otimes n$
state $\rho$.

\section*{Acknowledgement}
The third author is partially supported by the JSPS 
KAKENHI Grant Number JP17K05285.

\bibliography{CGDNeBSBib}
\bibliographystyle{plain}

\vspace*{0.5cm}
\noindent Priyabrata Bag\\
School of Mathematical Sciences,\\
SVKM's NMIMS (Deemed to be University),\\
V. L. Mehta Road, Vile Parle (West),\\
Mumbai, Maharashtra 400056, India.\\
E-mail: {\it priyabrata.bag@nmims.edu}

\vspace*{0.5cm}
\noindent Santanu Dey\\
Department of Mathematics,\\
Indian Institute of Technology Bombay,\\
Mumbai, Maharashtra 400076, India.\\
E-mail: {\it santanudey@iitb.ac.in}

\vspace*{0.5cm}
\noindent Hiroyuki Osaka\\
Department of Mathematical Sciences,\\
Ritsumeikan University,\\
Kusatsu, Shiga, 525-8577, Japan.\\
E-mail: {\it osaka@se.ritsumei.ac.jp}

\end{document}